\def\bra #1{\langle #1\vert}
\def\ket #1{\vert #1\rangle}
\def\virgolette #1{``#1"}
\newcommand{\beq}{\begin{equation}}
\newcommand{\eeq}{\end{equation}}
\newcommand{\ghz}{\si{\giga\hertz} }
\newcommand{\Pb}{p(\beta)}
\newtheorem{theorem}{Theorem}
\newtheorem{cor}[theorem]{Corollary}
\DeclareMathOperator{\Tr}{Tr}
\begin{document}

\title{Secure heterodyne-based quantum random number generator at 17 Gbps}

 \author{Marco Avesani}
 \affiliation{Dipartimento di Ingegneria dell'Informazione, Universit\`a degli Studi di Padova, Padova, Italia}

 \author{Davide G. Marangon}
 \affiliation{Dipartimento di Ingegneria dell'Informazione, Universit\`a degli Studi di Padova, Padova, Italia}
 
 \author{Giuseppe Vallone}
 \affiliation{Dipartimento di Ingegneria dell'Informazione, Universit\`a degli Studi di Padova, Padova, Italia}
 \affiliation{Istituto di Fotonica e Nanotecnologie, CNR, Padova, Italia}

 \author{Paolo Villoresi}
 \affiliation{Dipartimento di Ingegneria dell'Informazione, Universit\`a degli Studi di Padova, Padova, Italia}
 \affiliation{Istituto di Fotonica e Nanotecnologie, CNR, Padova, Italia}



 \begin{abstract}
Random numbers are commonly used in many different fields, ranging from simulations in fundamental science to security applications. In some critical cases, as Bell's tests and cryptography, the random numbers are required to be both secure (i.e. known only by the legitimate user) and to be provided at an ultra-fast rate (i.e. larger than Gbit/s).
However, practical generators are usually considered trusted, but their security can be compromised in case of imperfections or malicious external actions. 
In this work we introduce an efficient protocol which guarantees security and speed in the generation.
We propose a novel source-device-independent protocol based on  generic Positive Operator Valued Measurements and then we specialize the result to heterodyne measurements. The security of the generated numbers is proven
without any assumption on the source, which can be even fully controlled by an adversary. Furthermore, we experimentally implemented the protocol by exploiting 
heterodyne measurements, reaching an unprecedented secure generation rate of 17.42 Gbit/s, without the need to take into account finite-size effects.
Our device combines simplicity, ultrafast-rates and high security with low cost components, paving the way to new practical solutions for random number generation.
 \end{abstract}
\maketitle
\section{Introduction}
The possibility of generating random numbers by quantum processes is an invaluable resource in cryptography. Nowadays, common solutions based on Pseudo or classical Random Number Generators rely on deterministic processes, which are in principle predictable. On the contrary, Quantum Mechanics guarantees, from a theoretic point of view, that the outcome of the measurement is completely unpredictable. 
However, any imperfection in the physical 
realization of quantum random number generators (QRNG) can leak information correlated with the generated numbers, the so called \textsl{side information}.
Such classical or quantum correlations could be exploited by an eavesdropper to correctly guess the measurement outcomes.

The maximal amount of randomness that can be extracted in presence of such side information is given by the so called quantum conditional \textsl{min-entropy}~\cite{konig2009operational}.
Several approaches have been proposed to lower bound it, depending on the number of assumptions required on the devices used in the generator. 
For ``fully trusted'' QRNGs~\cite{Rarity1994,stefanov2000optical,jennewein2000fast},  the min-entropy can be evaluated because pure input states and well characterized measurement devices are assumed (see ~\cite{Vallone2014} for more details). In contrast, \textsl{device independent} (DI) protocols, by exploiting entanglement, don't need any assumption:
the violation of a Bell inequality directly bounds the min-entropy, without the need of trusting the generated state and the used measurement devices. Fully trusted QRNG, including all the commercial ones, are easy to realize, but they require strong assumptions for their use in cryptography.
On the contrary, DI protocols offer the highest level of security, but their realization is still too demanding for any practical use \cite{Pironio2010,Christensen2013,Bierhorst2017,Liu2018,Gomez2017}.

Semi-device-independent (Semi-DI) protocols \cite{Ma2016}, are a promising approach to enhance the security with respect to standard
``fully trusted'' QRNG, achieving fast generation rate, dramatically larger than DI-QRNG.
These require some weaker assumptions to bound the side information. Such assumptions can be related to the dimension of the underlying Hilbert space~\cite{Lunghi2014,Canas2014},   the measurement device \cite{Vallone2014,Marangon2017,Cao2016,Xu:16,Gomez2017} or the source \cite{Cao2015}, for example the mean photon number~\cite{Himbeeck2017semidevice} or the maximum overlap~\cite{Brask2017} of the emitted states.

In this work we introduce a QRNG belonging to the family of the Semi-DI generators.
In particular, we will describe a novel source-device independent (SDI) protocol 
by exploiting continuous variable (CV) observables of the electromagnetic (EM) field.
In previously realized CV-QRNGs \cite{Gabriel2010,Marangon2017},
random numbers were generated by using a homodyne detector that measures a quadrature of the EM field.
We propose and demonstrate a CV-QRNG based on heterodyne detection in the SDI framework: 
we will show how it is possible to obtain a lower bound on the eavesdropper quantum side information (i.e. the conditional min-entropy) and to achieve, 
to our knowledge, the fastest generation rate in the Semi-DI framework.

The advantages of heterodyne measurement over homodyne are multiple:
beside offering better tomography accuracy than homodyne
\cite{Rehacek2015, Muller2016}, heterodyne measurement offers an increased generation rate since it allows a 
``simultaneous measurement'' of both quadratures. 
In addition, the experimental setup is simplified with respect to the protocol based on homodyne introduced in \cite{Marangon2017}, since there is no need of an active switch to measure the two quadratures.
Finally, it is possible to derive a constant lower bound on the conditional quantum min-entropy, that doesn't change during the experiment.

Our SDI protocol assumes a trusted detector but it does not make any assumption on the source: an eavesdropper may fully control it, manipulating it in order to maximize her ability to predict the outcomes of the generator. Such approach is very effective in taking into account any imperfect state preparation.
Moreover, we will show the results of a practical realization of the protocol
with a compact fiber optical setup that employs only standard telecom components.
In this way, we are able to demonstrate a generation rate 
of secure random numbers 
greater than $17$ Gbit/s.

\section{A HETERODYNE QRNG}
In standard CV-QRNGs, random numbers are obtained
by measuring with an homodyne detector a quadrature observable of the EM fields, typically prepared in a vacuum state. 
CV-QRNGs are characterized by high generation rates due to
the use of fast photodiodes instead of (slow) single photon detectors: continuous spectrum of the observables typically assures more than one bit of entropy per measurement and the use of photodiodes with high bandwidth allow to sample the quadratures at GSample/s. 
In our QRNG, we implement a heterodyne detection scheme where two ``noisy quadrature observables'' are measured simultaneously \cite{Arthurs1965,Walker1987}. More precisely, an heterodyne measurement corresponds to the following Positive Operator Value Measurement (POVM) 
$\left\{\hat\Pi_\alpha\right\}_{\alpha\in \mathbb C}$
where
\beq
\hat\Pi_\alpha=\frac{1}{\pi}\ket{\alpha}\bra{\alpha} \,,
\eeq
 and $\ket{\alpha}$ is the coherent state with complex amplitude $\alpha$. If we define $\rho_A$ the density matrix of the EM field, the output of the 
heterodyne measurement is represented by the random variable $X$
\beq
X=\{\Re {\rm e} \left(\alpha\right),\Im {\rm m}\left(\alpha\right)  \} \,,
\eeq
distributed according to the following probability density function known as \textsl{Husimi function}:
 \beq
 Q_{\rho_A}(\alpha)=\text{Tr}\left[\hat\Pi_\alpha \rho_A \right]=\frac{1}{\pi}\bra\alpha\rho_A\ket\alpha\,.
 \eeq
In an ideal scenario where the QRNG user (Alice) can trust the source of random states: such scheme has the immediate advantage of doubling the generation rate with respect to an homodyne receiver.
Since the  \virgolette{raw} random numbers $\{\Re {\rm e}\alpha,\Im {\rm m}\alpha\}$ are typically not uniformly distributed, it is essential to  process them with a randomness extractor~\cite{Ma2013}. A randomness extractor compresses the input string of raw numbers, such that the shorter output string is composed by i.i.d. random bits.

In a fully-trusted QRNG, when the source is trusted and the input state is pure (such as for the vacuum) or the privacy of the generated numbers is not a concern, the number of random bits that can be extracted per sample is given by the so called classical min-entropy of $X$ 
\beq
\label{classHmin}
H_{\rm min}(X)=-\log_2[\max_\alpha Q(\alpha)] \,.
\eeq

However, ultrafast generation is worthless for cryptographic applications if the numbers are not secure and private. If security is important, quantum side information must be also taken into account and the conditional quantum min-entropy $H_{\rm min}(X|E)$ ~\cite{konig2009operational} must be evaluated. We recall that in the SDI framework an eavesdropper may have full control of the source and then may have some prior information on the generated numbers. 
We will show that with a heterodyne scheme it is possible to generate unpredictable and secure numbers also when the source of quantum states is controlled by the eavesdropper.

\section{A Secure heterodyne (or POVM) QRNG}
\subsection{Security of the CV protocol}
\label{sec:security}
In our SDI framework, Alice does not make any assumption on $\rho_A$, such as its dimension or purity: the source may be even controlled by a malicious QRNG manufacturer, Eve. 
This framework is well suited to deal with imperfect sources of quantum states~\cite{Vallone2014}.
On the contrary, Alice carefully characterizes her local measurement apparatus and trusts it.

In this scenario, Eve is assumed to prepare the state $\rho_A$ to be measured. 
In particular, Eve will prepare $\rho_A$ in order to maximize her guessing probability $P_\text{guess}$ of the outcomes of Alice heterodyne measurement. 
If the state $\rho_A$ is not pure, it can be prepared by Eve as a incoherent superposition of states $\hat\tau^A_\beta$ with probabilities $p(\beta)$, such as $\rho_A=\int \Pb\hat\tau^A_\beta d\beta$.
As shown below, for quantum state $\rho_A$ with positive Glauber-Sudarshan representation, Eve optimizes her strategy by using $\hat\tau^A_\beta$ that are coherent states.

When Eve generates the state $\hat\tau^A_\beta$, the best option for her is to bet on the heterodyne outcome with higher probability, namely  $\max_\alpha \Tr\left[\hat\Pi_\alpha \hat\tau^A_\beta \right]$. 
On average, Eve's probability of guessing correctly the output of the heterodyne measurement can be written as  
$P_{\rm guess}(X|\mathcal E)=\int  \Pb 
\max_\alpha 
\Tr \left[\hat\Pi_\alpha \hat\tau^A_\beta \right] d\beta$.
Having full control of the source, given the state $\rho_A$,  Eve chooses the  decomposition $\mathcal{E}=\{\Pb,\hat\tau^A_\beta\}$ 
that maximizes $P_{\rm guess}$.

\begin{figure}[ht]
\includegraphics[width=\linewidth]{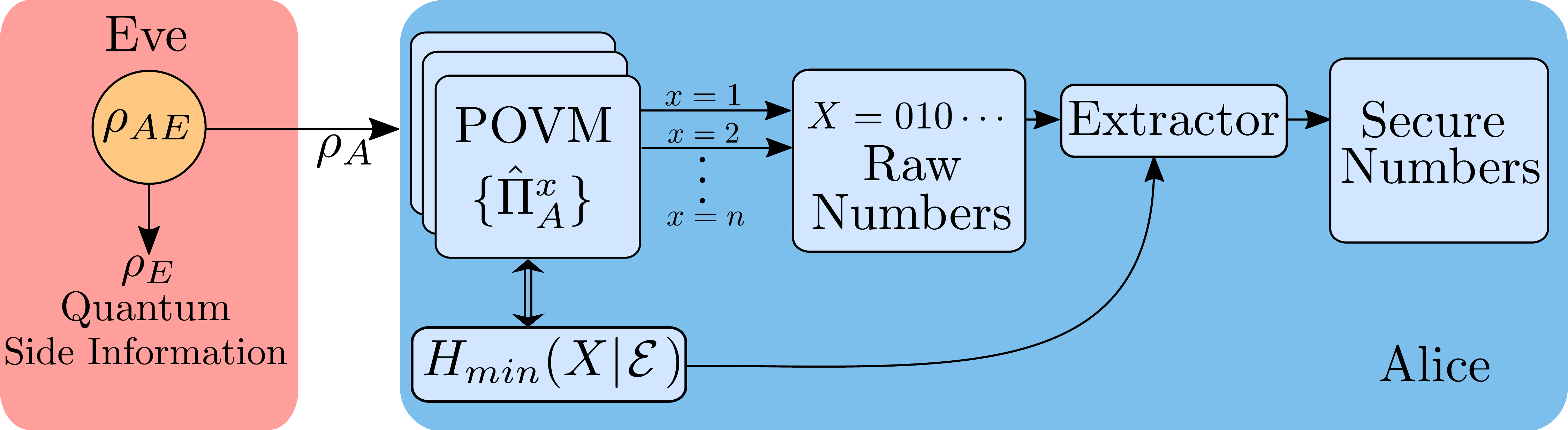}
\caption{In the general SDI scenario, Eve prepares the state $\rho_A$ that she sends to Alice such that her purification gives her the maximal guessing probability on Alice's outcome.  The structure of the POVM chosen by Alice to measure $\rho_A$ already impose a lower bound on $H_{min}(X|E)$, independently from the input state or the output of her measurement (see \href{prop1}{Proposition 1}). This bound is used to calibrate an extractor that returns secure random bits when applied to Alice's outcomes.}
\label{img:schemaPOVM}
\end{figure}

According to the Leftover Hash Lemma (LHL)~\cite{Tomamichel2011}, the extractable randomness in the presence of side information is quantified by the quantum conditional min-entropy
\beq
H_\text{min}(X|\mathcal{E}) = -\log_2 P_\text{guess}(X|\mathcal{E})\,,
\label{eq:cond_min_entropy}
\eeq
where $P_\text{guess}(X|\mathcal{E})$ is maximum probability of guessing $X$ conditioned on the quantum side information $\mathcal{E}$
\beq
P_\text{guess}(X|\mathcal{E})=\max_{\substack{\{\Pb,\hat\tau^A_\beta\}}} \int \Pb 
\max_\alpha
 \Tr\left[\hat\Pi_\alpha \hat\tau^A_\beta \right] d\beta \,.
\label{eq:gues_cond}
\eeq
The maximization in \eqref{eq:gues_cond} 
is performed over all
possible decomposition $\mathcal{E}=\{\Pb,\hat\tau^A_\beta\}$ that satisfy 
$\rho_A=\int \Pb \hat\tau^A_\beta d\beta$.
The above considerations are valid not only for the heterodyne measurement, but are correct for any POVM measurement (also with Hilbert spaces of finite dimensions).
Fig. \ref{img:schemaPOVM} represents a general protocol within this framework.
 It is worth noticing that 
$P_{\rm guess}$ is a true probability for finite dimension
Hilbert spaces, while it is a probability density for infinite dimension spaces (such as in the case of CV measurements).

By exploiting the properties of POVMs, we now derive a lower bound on 
$H_\text{min}(X|\mathcal{E})$ (and thus an upper bound on $P_\text{guess}(X|\mathcal{E})$).

\label{prop1}
{\bf Proposition 1.} {\it
For any POVM $\{\hat{ \Pi}_x\}$ the quantum conditional  min-entropy $H_{\rm min}(X|\mathcal{E})$ is lower-bounded by $-\max_{ \{ x,\hat\tau_A \in \mathcal{H}_A \} }\log_2(  \Tr[\hat{ \Pi}_x \hat\tau_A]  )$}.
\begin{proof}
\label{prop:POVM}
Given a set  of POVM $\{\hat\Pi_x\}$, the maximum over $x$ in 
\eqref{eq:gues_cond} is easy bounded by 
$\max_x
 \Tr\left[\hat\Pi_x \hat\tau^A_\beta \right]
 \leq\max_{x,\hat\tau_A}
 \Tr\left[\hat\Pi_x \hat\tau_A \right]
 $.
Then eq. \eqref{eq:gues_cond} is upper bounded by:
\begin{align}
\notag
P_{\rm guess}(X|\mathcal{E})_\text{min} & \leq  
\max_{x,\hat\tau_A}
 \Tr\left[\hat\Pi_x \hat\tau_A \right]
 \max_{\Pb,\tau_B} \int \Pb d\beta \\
& =\max_{ \{ x,\hat\tau_A \in \mathcal{H}_A \} } \Tr[\hat{ \Pi}_x \hat\tau_A]
\end{align}
from which the bound on the min-entropy easily follows by
using \eqref{eq:cond_min_entropy}.
\end{proof}
If the POVM reduce to projective measurements, the above bound is trivial, since it always possible to find a state $\hat\tau_A$ such that $\Tr[\hat{ \Pi}_x \hat\tau_A]=1$: in this case,
no randomness can be extracted.
However, for an overcomplete set of POVM we may have $\max_{ \{ x,\tau_A \} } \Tr[\hat{ \Pi}_{x} \hat\tau_A] < 1$ and therefore randomness can always be extracted.
We now exploit the above proposition for 
the specific case of heterodyne measurement.
\begin{cor}
\label{cor:het}
For the heterodyne measurement the quantum conditional min-entropy $H_{\rm min}(X|\mathcal{E})$ is lower-bounded by $\log_2\pi$. The bound is tight for quantum state with positive Glauber-Sudarshan $\mathcal P(\alpha)$ representation.
\end{cor}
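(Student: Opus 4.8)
The plan is to specialize Proposition~1 to the heterodyne POVM $\{\hat\Pi_\alpha\}$ with $\hat\Pi_\alpha=\frac{1}{\pi}\ketbra{\alpha}{\alpha}$, and then to exhibit an explicit decomposition of any state with positive Glauber-Sudarshan function that saturates the resulting bound.

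First I would evaluate the quantity $\max_{x,\hat\tau_A}\Tr[\hat\Pi_x\hat\tau_A]$ of Proposition~1 for this POVM. Since $\Tr[\hat\Pi_\alpha\hat\tau_A]=\frac{1}{\pi}\bra{\alpha}\hat\tau_A\ket{\alpha}$ and $\ket{\alpha}$ is a unit vector, every density operator $\hat\tau_A$ satisfies $0\le\bra{\alpha}\hat\tau_A\ket{\alpha}\le1$, the right value being attained by the pure state $\hat\tau_A=\ketbra{\alpha}{\alpha}$. Hence $\max_{\alpha,\hat\tau_A}\Tr[\hat\Pi_\alpha\hat\tau_A]=\frac{1}{\pi}$, and Proposition~1 together with \eqref{eq:cond_min_entropy} gives directly $H_{\rm min}(X|\mathcal E)\ge-\log_2\frac{1}{\pi}=\log_2\pi$.

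For the tightness claim I would assume $\rho_A$ has a positive representation $\rho_A=\int\mathcal P(\beta)\ketbra{\beta}{\beta}\,d\beta$ with $\mathcal P(\beta)\ge0$, and let Eve pick precisely this convex decomposition, $\mathcal E=\{\mathcal P(\beta),\ketbra{\beta}{\beta}\}$. For a coherent state $\hat\tau^A_\beta=\ketbra{\beta}{\beta}$ the Husimi function is $Q_{\ketbra{\beta}{\beta}}(\alpha)=\frac{1}{\pi}|\braket{\alpha}{\beta}|^2=\frac{1}{\pi}e^{-|\alpha-\beta|^2}$, whose maximum over $\alpha$ is $\frac{1}{\pi}$, reached at $\alpha=\beta$. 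Substituting in \eqref{eq:gues_cond} yields $P_{\rm guess}(X|\mathcal E)=\int\mathcal P(\beta)\frac{1}{\pi}\,d\beta=\frac{1}{\pi}$, which equals the upper bound just derived; therefore $H_{\rm min}(X|\mathcal E)=\log_2\pi$ exactly for such states. As a by-product this confirms the earlier remark that, when the $\mathcal P$ function is positive, Eve's optimal ensemble is made of coherent states.

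The only computation involved is the Gaussian overlap $|\braket{\alpha}{\beta}|^2=e^{-|\alpha-\beta|^2}$, which is routine. The point needing a word of care is that here $P_{\rm guess}$ is a probability density, not a probability, so ``the bound is tight'' must be understood as: the Husimi density attains its extremal value $1/\pi$ everywhere along the optimal ensemble, and by Proposition~1 no admissible decomposition of a positive-$\mathcal P$ state can do better. That observation --- that restricting Eve to the coherent-state decomposition loses nothing for positive-$\mathcal P$ states --- is in fact the main (and mild) obstacle, and it is settled immediately because that decomposition already meets the Proposition~1 bound.
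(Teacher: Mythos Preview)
Your proof is correct and follows essentially the same route as the paper: apply Proposition~1 using $\bra{\alpha}\hat\tau_A\ket{\alpha}\le 1$ (equivalently $Q_{\tau_A}(\alpha)\le 1/\pi$) to get the lower bound, and for tightness take Eve's decomposition to be the positive $\mathcal P$-representation itself, observing that coherent states saturate the Husimi bound. Your version is slightly more explicit (the Gaussian overlap, the remark on probability densities), but the argument is the same.
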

\begin{proof}
It is well known that the Husimi function is upper bounded by $\frac1\pi$. 
Then $\Tr[\hat{ \Pi}_\alpha \hat\tau_A]=\frac{1}{\pi}
\bra{\alpha}\tau_A\ket{\alpha}=Q_{\tau_A}(\alpha)\leq\frac1\pi$, $\forall \tau_A$. 
By proposition 1, it follows that 
$H_{\rm min}(X|\mathcal{E})_\text{min}
\geq \log_2\pi$. To
 show the tightness, we note that any matrix $\rho_A$ can  be written as $\rho_A= \int \mathcal P(\alpha) \ket{\alpha}\bra{\alpha} d^2\alpha$ where $\mathcal P(\alpha)$ is the Glauber-Sudarshan P-function. 
If $\mathcal P(\alpha)$ is positive it can be interpreted as a probability density and the state $\rho_A$
can be seen as an incoherent superposition of coherent states. 
Since coherent states maximize the output probability for the heterodyne POVM, then the optimal decomposition in \eqref{eq:gues_cond} is precisely $\mathcal E=\{\mathcal P(\alpha),\ket{\alpha}\bra{\alpha}\}$ and the quantum conditional min-entropy is exactly $H_{\rm min}(X|\mathcal E)=\log_2\pi$.
\end{proof}

By using an heterodyne measurement scheme, a quantum tomography of the input state is also obtained \cite{leonhardt1997measuring}:
while Alice generates the raw random numbers, she also reconstructs the state $\rho_A$. Then it is possible to evaluate numerically the quantum conditional min-entropy 
by using \eqref{eq:cond_min_entropy} and \eqref{eq:gues_cond}.
Although for a qubit system, this problem was elegantly addressed by \cite{Fiorentino2007a}, it is not of easy solution in the CV case. 
On the other hand, Corollary 1 gives an easy lower bound on $H_{\rm min}(X|\mathcal E)$. 
Alice knows that even if Eve forges a state with an optimal $\mathcal{E}$, such side information will not let Eve guess the heterodyne outcome with a probability (density) larger than $\frac{1}{\pi}$.
In the presence on an imperfect source of quantum states, this is the most conservative strategy to adopt, but ensures the generation of completely secure random numbers while avoiding a complex numerical maximization.

It is worth noticing that
in many cases such lower bound is {\it tight}: indeed,
coherent and thermal states have positive 
Glauber-Sudarshan $\mathcal P(\alpha)$ function
and for those states the bound $\log_2\pi$ is tight.
Moreover, in contrast to other Semi-DI QRNG where the min-entropy needs to be estimated in real time to provide security \cite{Lunghi2014,Brask2017,Marangon2017}, in our protocol it depends on the structure of the heterodyne POVM and it is always constant.
Hence, Alice can apply on $X$ a randomness extractor calibrated on $\log_2\pi$ and erase any guessing advantage of Eve.
In the following we adapt the bound of Corollary 1 to realistic POVMs with finite resolution.
\subsection{Practical Bound}
\begin{figure*}[!t]
\includegraphics[width=\textwidth]{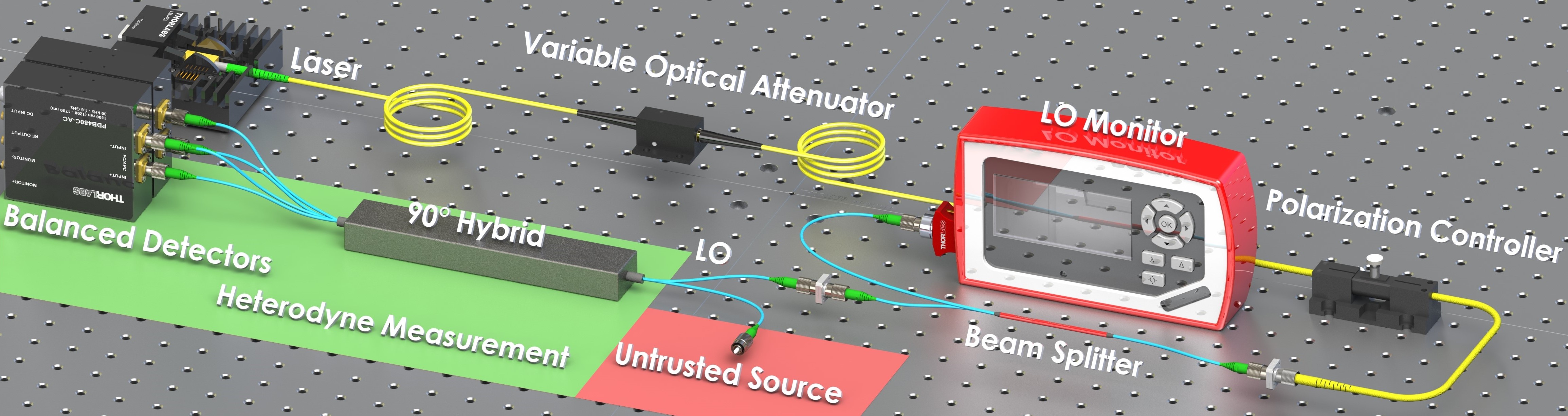}
\caption{Schematic representation of the experimental setup. Only commercial off-the-shelf devices were used.}
\label{img:schema3D}
\end{figure*}
In a real implementation, any heterodyne measurement is discretized. This means that the possible outcomes of the measure are discrete with a resolution
given by $\delta_q$ and $\delta_p$ for the two ``quadratures''. 
The discretized version of the POVM $\hat\Pi_\alpha$ is then given by
$
\hat \Pi^{\delta}_{m,n}=
\int_{m\delta_q}^{(m+1)\delta_q}\!\!d\Re{\rm e}\alpha
\int_{n\delta_p}^{(n+1)\delta_p}\!\!d\Im \text{m}\alpha
\,\,\hat\Pi_\alpha
$

and the average probability of guessing their outputs is given by
$P_\text{guess}(X|\mathcal{E})=
\max_{\{p_\beta ,\hat\tau^A_\beta \}} 
\int \Pb 
\max_{m,n}
\Tr\left[\hat\Pi^\delta_{m,n} \hat\tau^A_\beta \right]
$.

The term $\text{Tr}\left[\hat\Pi^\delta_{m,n} \hat\tau^A_\beta \right]$ is upper bounded by $\frac{\delta_q\delta_p}{\pi}$. 

Then the probability $P_\text{guess}$ is upper bounded by $\frac{\delta_p\delta_q}{\pi}$ and the quantum min-entropy is lower bounded by
\beq
\label{eq:bound}
H_\text{min}(X_\delta|\mathcal{E}) \geq \log_2\frac{\pi}{\delta_q\delta_p}\,.
\eeq
Hence, in the real-life implementation, the min-entropy of the random numbers is bounded by a function that depends on the measurement resolution only.
The measurement, in this scenario, is under control of the user: Alice can readily obtain the min-entropy (\ref{eq:bound}) by measuring $\delta_p$ and $\delta_q$ of her well characterized apparatus. 
The min-entropy is constant and Alice does not need to worry updating its value, as long as she trusts the apparatus.

\section{ Experimental results}
The proposed new protocol has been implemented with an all-fiber setup at telecom wavelength with the scheme in Fig. \ref{img:schema3D}; in this way is possible to exploit the availability of fast off-the-shelf components for classical telecommunication while keeping the setup compact.
The heart of the experiment lies in the heterodyne detection of the vacuum state, that  samples the $Q$ function with the help of a coherent field $\ket{\alpha}$ of a Local Oscillator (LO). We employed a narrow linewidth ECL laser at $1550 \si{\nano\meter}$ (Thorlabs SFL1550) followed by and electronically-controlled Variable Optical Attenuator (VOA) and a in-line Polarization Controller (PC). In this way we were able to finely control the intensity and the polarization of our LO, besides making the calibration procedure automatized.

Before entering the heterodyne measurement, $10\%$ of the LO is sent to a photodetector, for a continuous monitor of its intensity. By doing that, any anomaly to the normal functioning of the LO can be noticed in real-time, and  deviations can be compensated during the post-processing.

The optical heterodyne was realized with a commercial fiber integrated \virgolette{90 degree hybrid}: one port is coupled to the LO while from the other is entering the vacuum state. However, we work in the SDI scenario and, from the point of view of security, this port can be fully controlled by Eve, since we don't assume anything about the source.
The 90 degree hybrid mixes the signal with the LO and returns two pairs of outputs, featuring a $\nicefrac{\pi}{2}$ phase shift. These optical signals, detected by a couple of high-bandwidth balanced detectors ($1.6\, \ghz$ Thorlabs-PDB480C), are proportional to the quadratures of the signal, Re$\left[ \alpha \right]$, Im$\left[ \alpha \right]$.

We sampled both signals coming from the detectors using a fast oscilloscope with a sampling rate of $10$ GSamples/s at 10 bits of resolution (Lecroy HDO 9404). Each samples contains 20 bits of raw data, 10-bit for Re$\left[\alpha\right]$ and 10-bit for Im$\left[\alpha\right]$.
The raw signals of the ADC are proportional to the quadratures and directly sample the Q-function in the phase space, as shown in Fig. \ref{img:fig_exp}. The resolution of the ADC can be directly converted to the equivalent resolution in the phase space; in our case we got 
$\delta\text{Re}\left[\alpha\right]=(14.05 \pm 0.02)\cdot10^{-3} $ and $\delta\text{Im}\left[\alpha\right]=(14.14 \pm 0.02)\cdot10^{-3}$, respectively.

\begin{figure}[!ht]
	\begin{center}
		\includegraphics[width=\linewidth]{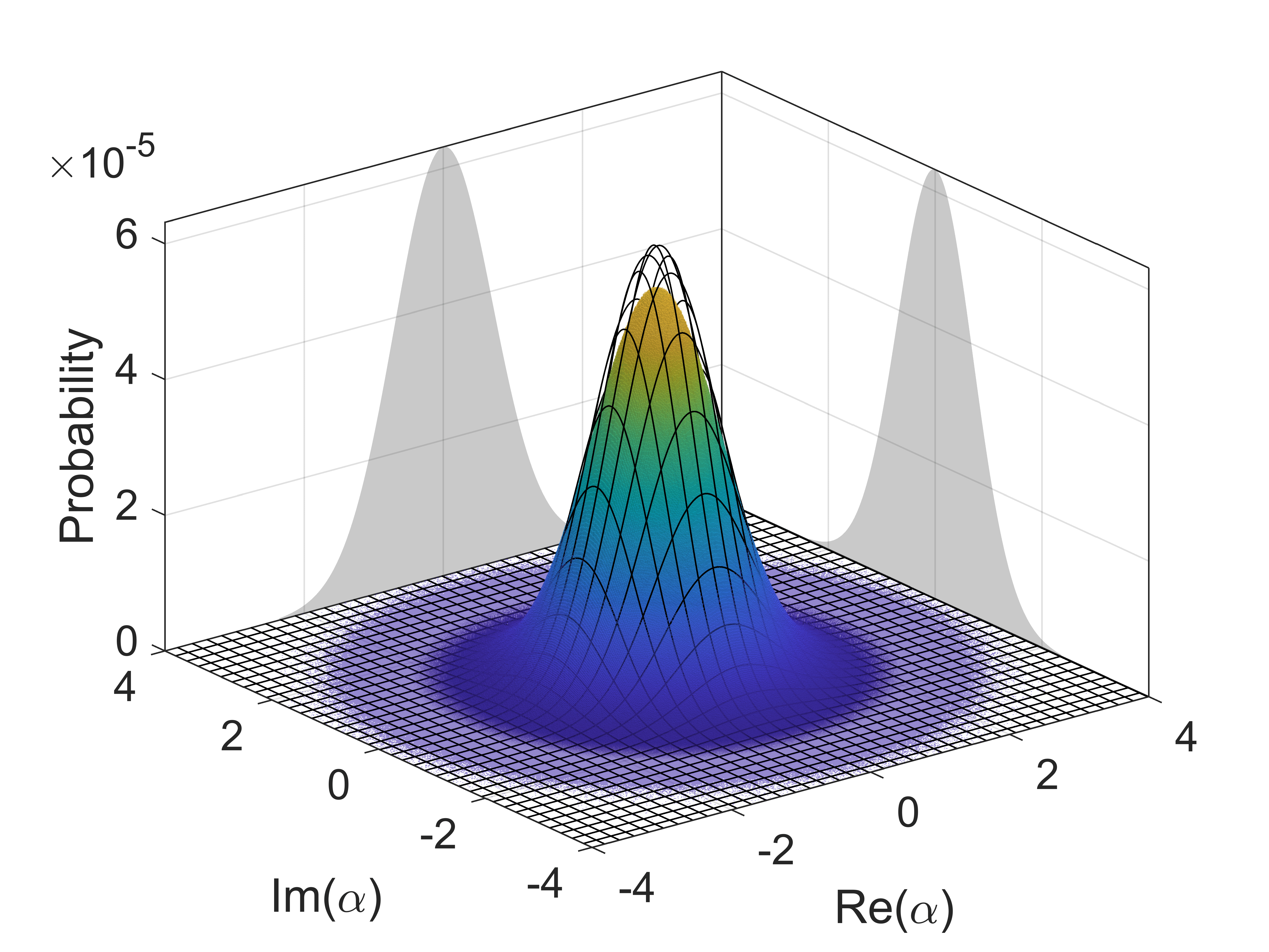}
	\end{center}
	\caption{The plot shows the Husimi function for the vacuum (meshed curve) and the measured state (colored histogram). The measured variance is slightly larger that the one expected for the vacuum due to the electronic noise that widens the distribution. }
	\label{img:fig_exp}
\end{figure}

The raw data are then digitally filtered, taking only a $1.25\, \ghz$ window in the central part of the spectrum obtained by the detectors. This is necessary in order to remove classical noise that is coupled with the detector. Finally, the data are downsampled at $1.25$ GSample/s, matching the bandwidth of the signal and removing any correlation introduced by the oversampling.

We acquired $6\cdot10^{10}$ measurements obtaining $\sigma^2_{\text{Re}\left[\alpha\right]}=0.55135 \pm 0.00001$ and $\sigma^2_{\text{Im}\left[\alpha\right]}=0.56732 \pm 0.00001$.
As it can be seen from Fig. \ref{img:fig_exp},  the measured Q-function is slightly larger than the one expected for a pure vacuum state, where both variances are expected to be equal to 1/2. 
The increase of the variances is due to classical noise of the detectors: in our approach, such noise is regarded as a ``spreading'' of the Q-function and thus is already included in our analysis for the quantum min-entropy.
The classical min-entropy $H_{\rm min}(X)$ corresponds to the larger probability of output and it is given by
\beq
\label{eq:HminClassical}
H_{\rm min}(X)=14.100\,.
\eeq
However, the quantum min-entropy can be lower bounded by eq. \eqref{eq:bound}. With the quadrature resolutions used for the experiment, we obtain
\beq
H_{\rm min}(X|\mathcal{E}) \geq 13.949\,,
\label{eq:hmin}
\eeq
for an equivalent secure generation rate of 17.42 Gbit/s. 
It is worth noticing that the high gain in security guaranteed by the conditional quantum min-entropy of eq. \eqref{eq:hmin} with respect to the classical min-entropy eq. \eqref{eq:HminClassical}
implies a very small reduction of the generation rate (from 14.10 to  13.949 bits per sample).

In addition, these rates are not calculated in the asymptotic regime, i.e. in the limit of infinite repetitions of the protocol, but are valid for single shot measurements.
In fact, the conditional min-entropy $H_{\text{min}}(X|\mathcal{E})$ is not estimated from the data, but it's bounded considering the structure of the POVM and the optimal strategy for the attacker, making it independent from the number of rounds of the protocol.
Finally, a Toeplitz randomness extractor \cite{Frauchiger2013} is calibrated using $H_{\rm min}(X|\mathcal{E})$, and extracts the certified numbers from the raw data. As a final check, we applied a series of statistical tests from the DieHarder and NIST suite: all of them are successfully passed, as shown in Appendix \ref{sec:statistica_tests}.

\section{Conclusions}
In this work we demonstrated the versatility of heterodyne detection scheme for the generation of secure random numbers in a CV-SDI framework, where no assumption on the source of quantum state is required.
In fact, exploiting the properties of the POVM implemented by the heterodyne measurement,
in Corollary 1 we obtained a direct lower bound to the conditional min-entropy, and hence on its security.
This bound enables the user to erase all the side information related with an imperfect or malicious source of quantum states.
Compared to previous SDI-QRNGs \cite{Vallone2014,Ma2016,Marangon2017} this security is obtained 
without affecting the generation rate: in the previous protocols, 
part of the generated numbers were consumed to estimate and update the bound to the conditional min-entropy.
In the protocol introduced here, the bound is constant, since it is determined by the resolution of the trusted measurement apparatus only.
Hence, \textsl{all} the secure numbers are available to the user.
Such simplification has many advantages for any practical implementation of the protocol.
Our approach allows indeed to merge the speed of heterodyne measurements and the security of semi-device-independent protocols.
Indeed, we realized the protocol with off-the-shelf components achieving 17.42 Gbit/s rates,
which is to our knowledge the fastest random generation rate for a semi-DI QRNG.

\begin{acknowledgments}
The authors thank R. Filip for fruitful discussions.
\end{acknowledgments}
\appendix 
\section{Calibration}
In the SDI framework we assume a trusted and characterized measurement device. In order to enforce that, before every run of the experiment we perform a calibration of our detection stage. This procedure is necessary for the evaluation of security, because it links the voltage output of the detectors to the relative quantities in the phase space, enabling us to calculate $\delta_q,\delta_p$.
\begin{figure}[!ht]
	\begin{center}
		\includegraphics[width=\linewidth]{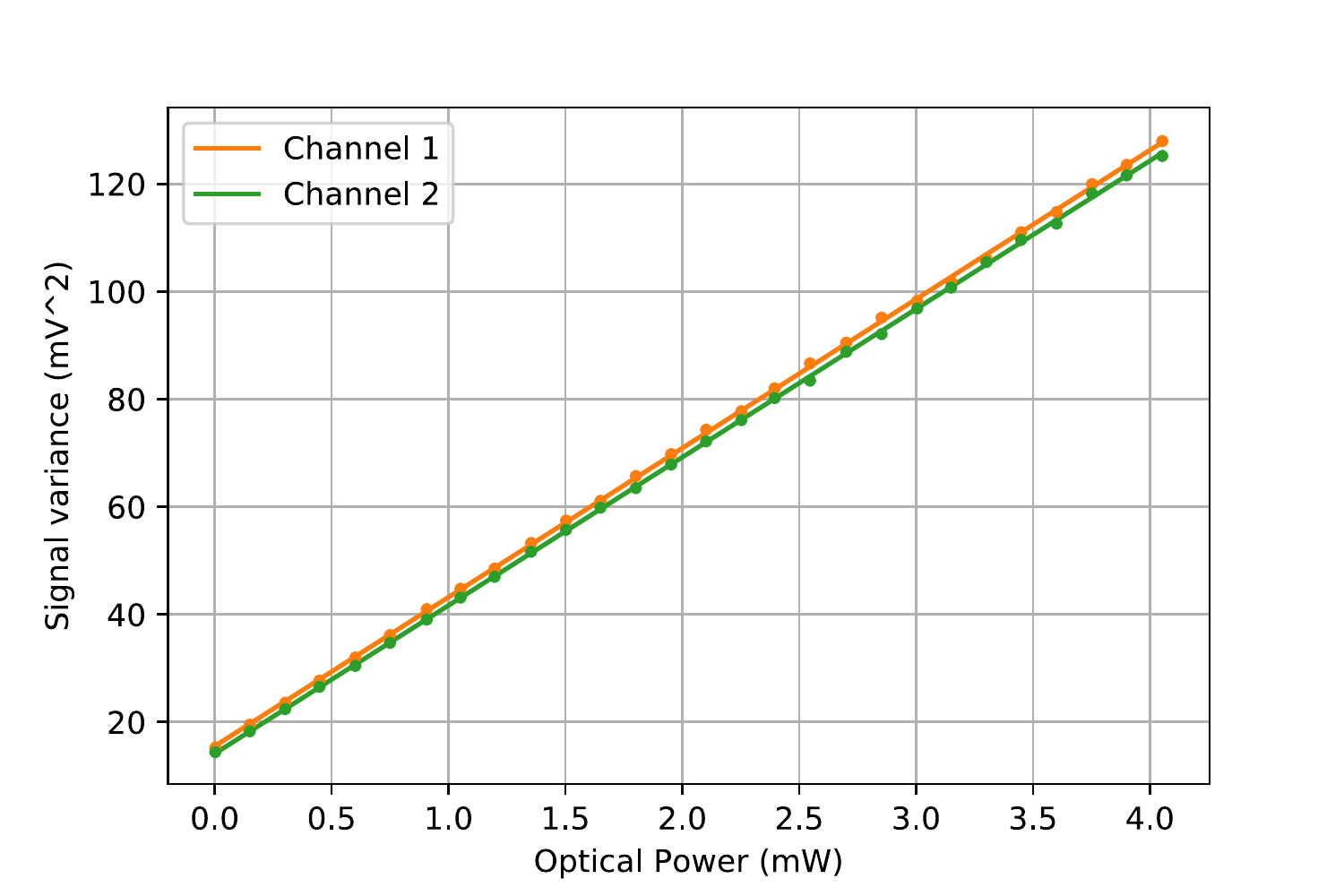}
	\end{center}
	\caption{The graph shows the linear dependence of the signal quadrature $\sigma^2_V$ as a function of the LO power.}
	\label{img:calibrazione}
\end{figure}
\begin{figure*}[!ht]
\includegraphics[width=0.98\textwidth]{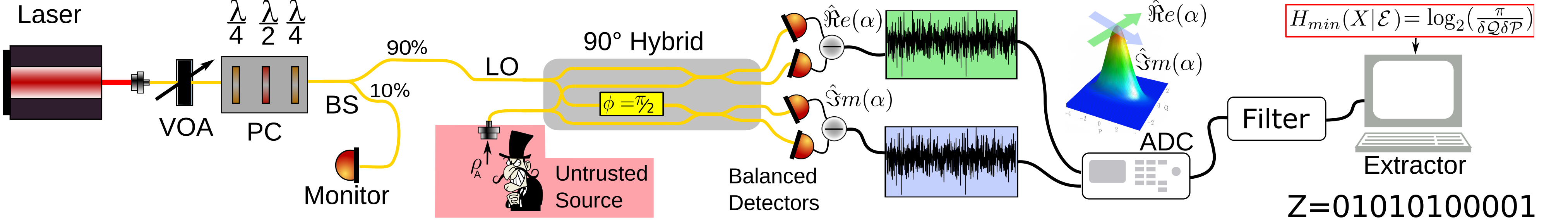}
\caption{Schematic representation of the experimental setup. The elements present are: Laser source used for the Local Oscillator (LO), Variable Optical Attenuator (VOA), Fiber Polarization Controller (PC), Fiber Beam Splitter (BS), a 90$\deg$ Optical Hybrid, a couple of High-Speed balanced photodetectors, a fast oscilloscope used as an Analog-to-Digital Converter (ADC), PC for the digital filtering and extraction }
\label{img:schema2D}
\end{figure*}

The calibration is performed automatically by the software that controls the QRNG: by varying the Variable Optical Attenuator (VOA), the power of the LO is changed from  $\num{0.01}\si{\milli\watt}$ to $4.05 \si{\milli \watt}$, when measured with the monitor photodiode.  
For each power, the signal of the balanced detector is recorded and the variance  $\sigma^2_V$ is estimated. As we can see in Fig. \ref{img:calibrazione} the relation is linear for all the tested powers (i.e. we never reached the saturation of the detector's amplifiers). From the fit, $m_1=(2.783 \pm 0.005)\cdot 10^{-2} \si{\volt^2\per \watt}$ and $q_1=(1.526 \pm 0.005)\cdot10^{-5} \si{\volt^2}$ for the slope and intercept of the first detector and $m_2=(2.748 \pm 0.004)\cdot 10^{-2} \si{\volt^2\per \watt}$ and $q_2=(1.419 \pm 0.004)\cdot 10^{-5} \si{\volt^2}$ for the second one. The slopes were used to convert
the measured voltages into phase-space quantities.The non-null intercept in both cases is caused by the electronic excess noise from the detectors and, since does not originate from the quantum measurement, is regarded a side-information available to Eve.

\section{Filtering, noise and autocorrelation}
\vspace*{-5mm}
\begin{figure}[!htbp]
	\begin{center}
		\includegraphics[width=\linewidth]{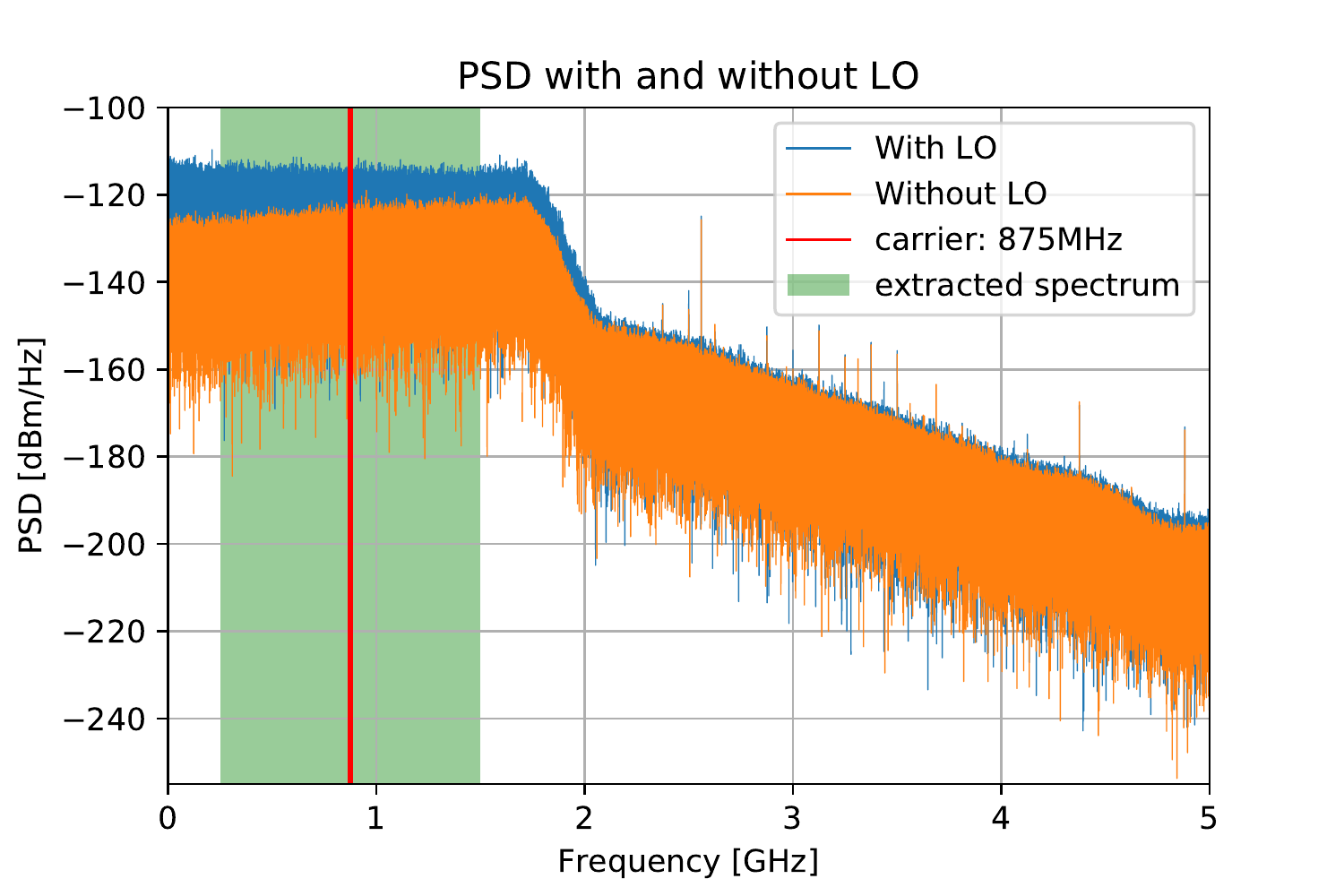}
	\end{center}
    \vspace*{-5mm}
	\caption{Spectrum obtained from the detectors with or without the LO  active. In green is highlighted the portion kept after the digital filtering and used for the generation. The peaks present after the $3\si{\decibel}$ point of the detectors are introduced by the oscilloscope at harmonics of the sampling frequency and are not present if the spectrum is obtained with an analog spectrum analyzer (HP 8561B).}
	\label{img:psd}
\end{figure}
To further reduce the classical noise from the detectors (at the expense of a reduced generation rate) we perform a filtering of the signal, as it can be see in the full schematic of the setup presented in Figure \ref{img:schema2D}.

Figure \ref{img:psd} shows the power spectral density of the signal produced by the detectors when the LO is turned on and when the LO is off. 
Although, the response seems uniform along the entire bandwidth of the detectors ($1.6 \si{\giga\hertz }$), the initial part of the spectrum ($DC - 1 \si{\mega\hertz}$) is affected by technical noise. In order to filter out this noise and enhance the signal-to-noise ratio, we have considered for the random generation only a window large $1.25 \si{\giga\hertz}$ centered around $875 \si{\mega\hertz}$. With this selection, the gap is never lower than $9.6 \si{\decibel}$.
The selection has been done digitally.
\begin{figure}[!htbp]
	\begin{center}
		\includegraphics[width=\linewidth]{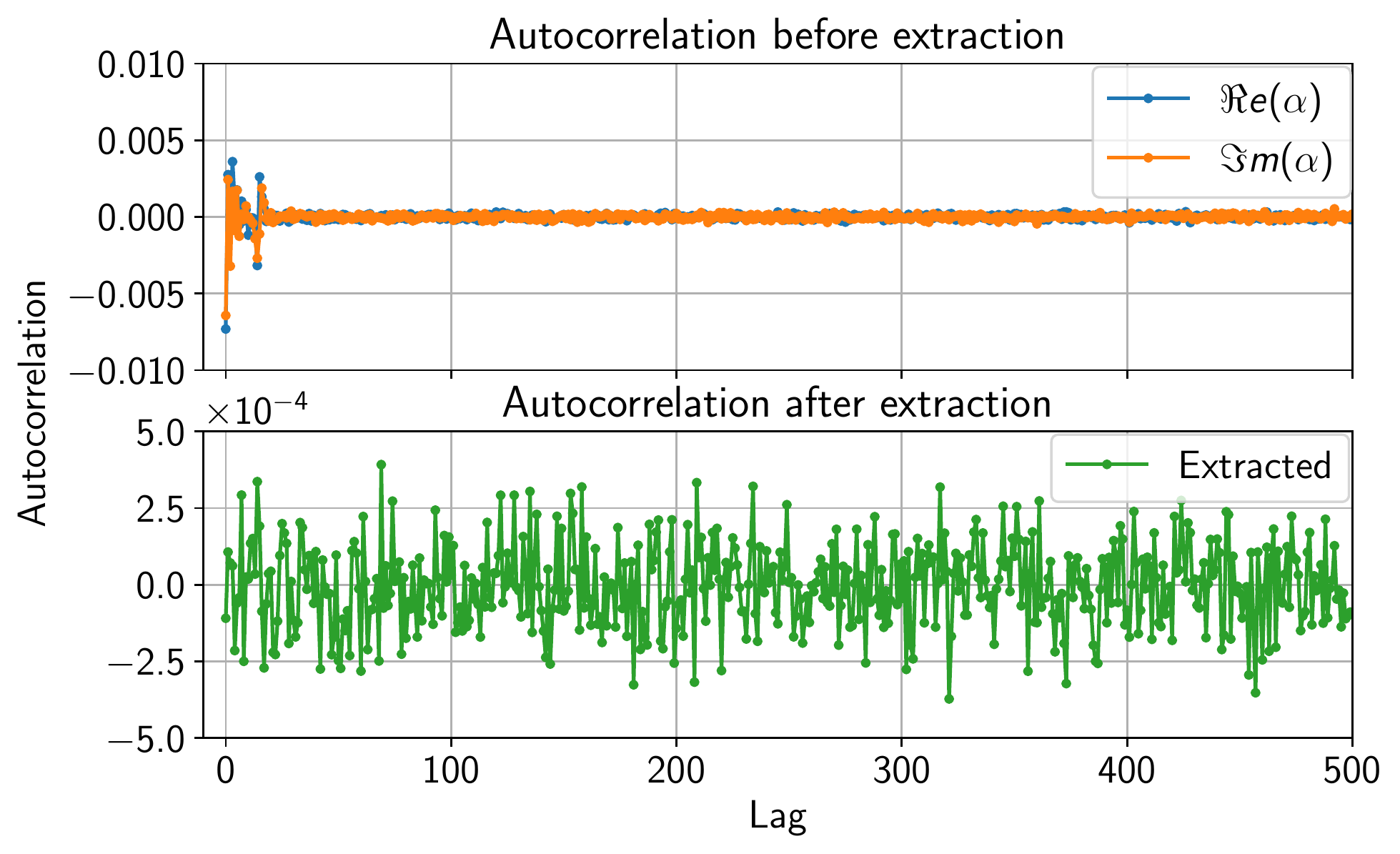}
	\end{center}
    \vspace*{-5mm}
	\caption{Autocorrelation measured for a sample of $5\cdot10^7$ filtered and extracted numbers. The spikes present in the first lags before the extraction are  due to the noise introduced by our sampling equipment. However, they are completely absent after the extraction. }
	\label{img:autocorrelation}
\end{figure}

However, employing a Brick-wall filter in the frequency domain, inevitably induces correlation in the time-domain of our signal: indeed we observe a $sinc$ dependence in the autocorrelation, as expected from the Wiener-Khinchin theorem. The correlation is removed by undersampling the signal in such a way to match the first zero of the autocorrelation function. 
Figure \ref{img:autocorrelation} shows the residual autocorrelation after the downsampling,  before and after the randomness extraction for a run of $5\cdot10^7$ samples. The results, even before the extraction, are good, with values always  below $7.5\cdot10^{-3}$ and typically below $1\cdot10^{-4}$, except for the first lags. The value of the first lag is due to noise introduced by the oscilloscope at harmonics of its sampling rate frequency.

In Figure \ref{img:psd}, these distortions are clearly visible at high frequencies, where there is no contribution from the signal.
However, after the extractor, all the classical noise is eliminated and the autocorrelation is completely flat, also for the initial lags.

\section{Statistical Tests}
\label{sec:statistica_tests}
In order to check for problems in our implementation we performed some statistical test on the generated numbers. First, we implemented the fast computable two-universal hash function introduced in \cite{Frauchiger2013}, then we used it to extract the final numbers from the filtered samples. We calibrated the extractor with the value of $H_{\rm min}(X|\mathcal{E})_\text{min}$ of eq. \eqref{eq:hmin} and then we extracted $\approx 5.18\cdot10^{10}$ random numbers from an initial set of $7.5\cdot10^{10}$ raw numbers. We tested them with the NIST \cite{bassham2010sp} and {\it dieharder} suite \cite{brown-2013}: in both cases all the tests were passed, as we can see in Table \ref{tab:test}. Passing these tests doesn't certify the randomness, but only shows that some patterns are not present in the analyzed data. However, since our QRNG is supposed to pass all of them, is a way to double-check that our setup is working as expected.

\onecolumngrid

\begin{table}[!htbp]
\begin{minipage}{0.4\linewidth}
\centering
\begin{tabular}{lll}
\hline
\rule{0pt}{3ex}Test's name & P-value & Result \\ 
\hline 
\rule{0pt}{3ex}diehard birthdays		&	0.398	&	  PASSED  	\\
diehard operm5			&	0.391	&	  PASSED  	\\
diehard rank 32x32		&	0.414	&	  PASSED  	\\
diehard rank 6x8		&	0.767	&	  PASSED  	\\
diehard bitstream		&	0.529	&	  PASSED  	\\
diehard opso			&	0.655	&	  PASSED  	\\
diehard oqso			&	0.758	&	  PASSED  	\\
	diehard dna			&	0.731	&	  PASSED  	\\
diehard count 1s str	&	0.482	&	  PASSED  	\\
diehard count 1s byt	&	0.361	&	  PASSED  	\\
diehard parking lot		&	0.515	&	  PASSED  	\\
diehard 2dsphere		&	0.484	&	  PASSED  	\\
diehard 3dsphere		&	0.739	&	  PASSED  	\\
diehard squeeze			&	0.580	&	  PASSED  	\\
diehard sums			&	0.140	&	  PASSED  	\\
diehard runs			&	0.478	&	  PASSED  	\\
diehard runs			&	0.316	&	  PASSED  	\\
diehard craps			&	0.348	&	  PASSED  	\\
diehard craps			&	0.937	&	  PASSED  	\\
marsaglia tsang gcd		&	0.504	&	  PASSED  	\\
marsaglia tsang gcd		&	0.444	&	  PASSED  	\\
sts monobit				&	0.204	&	  PASSED  	\\
sts runs				&	0.716	&	  PASSED  	\\
sts serial				&	0.151	&	  PASSED  	\\
rgb bitdist				&	0.056	&	  PASSED  	\\
rgb minimum distance	&	0.043	&	  PASSED  	\\
rgb permutations		&	0.068	&	  PASSED  	\\
rgb lagged sum			&	0.019	&	  PASSED  	\\
\botrule
\end{tabular} 
\end{minipage}
\hspace{1cm}
\begin{minipage}{0.4\linewidth}
\centering
\begin{tabular}{lll}
\hline\rule{0pt}{3ex}
		Test's name & P-value & Result \\ 
		\hline
        \rule{0pt}{3ex}Frequency		&	0.980	&	PASSED	\\
			BlockFrequency	&	0.323	&	PASSED	\\
			CumulativeSums	&	0.819	&	PASSED	\\
			CumulativeSums	&	0.265		&	PASSED	\\
			Runs			&	0.187	&	PASSED	\\
			LongestRun		&	0.864	&	PASSED	\\
			Rank			&	0.372	&	PASSED	\\
			DFT				&	0.341	&	PASSED	\\
			NonOverlappingTemplate	&	0.016	&	PASSED	\\
			OverlappingTemplate		&	0.748	&	PASSED	\\
			Universal				&	0.381	&	PASSED	\\
			ApproximateEntropy		&	0.509	&	PASSED	\\
			RandomExcursions		&	0.315	&	PASSED	\\
			RandomExcursionsVariant	&	0.047	&	PASSED	\\
			Serial					&	0.318	&	PASSED	\\
			LinearComplexity		&	0.373	&	PASSED	\\
  			\botrule
\end{tabular} 
\end{minipage}
\caption{Result of Dieharder (left) and NIST (right) test suite on the extracted random numbers. In the case of multiple tests in a category, the smallest have been reported. }
\label{tab:test}
\end{table}
\twocolumngrid

\bibliography{biblio.bib}
\end{document}